\newcommand{\comment}[1]{}
\def\dqt{\textquotedbl}
\newcommand{\evalexpr}[1]{\left[\!\left[ #1 \right]\!\right] }
\newcommand{\transexpr}[1]{\left[ #1 \right]_{\cal R} }
\newcommand{\iri}[1]{<\!{\tt{#1}}\!>}
\def\utup{\mbox{$U$-{\tt Tup}}}
\begin{document}

\long\def\comment#1{}

\title{Provenance for SPARQL queries}

\author{C. V. Dam\'asio\inst{1} and A. Analyti\inst{2} and G. Antoniou\inst{3}}

\institute{CENTRIA, Departamento de Inform\'atica
Faculdade de Ci\^encias e Tecnologia
Universidade Nova de Lisboa,
2829-516 Caparica, Portugal.\\
\email{cd@fct.unl.pt}
\and Institute of Computer Science, FORTH-ICS,
Crete, Greece \\
\email{analyti@ics.forth.gr}
\and 
Institute of Computer Science, FORTH-ICS, and \\
Department of Computer Science, University of Crete,
Crete, Greece \\
\email{antoniou@ics.forth.gr}
}

\date{March 2013}

\footnotetext[0]{Extended version of the ISWC 2012 paper including proofs, published in Lecture Notes on Computer Science 7649, pp. 625-640, ISBN 978-3-642-35175-4, Springer Berlin Heidelberg, 2012. The original publication is available at \href{http://dx.doi.org/10.1007/978-3-642-35176-1_39}{www.springerlink.com}.}

\maketitle

\begin{abstract}
Determining trust of data available in the Semantic Web is fundamental for applications and users, in particular for linked open data obtained from SPARQL endpoints. There exist several proposals in the literature to annotate SPARQL query results with values from abstract models, adapting the seminal works on provenance for annotated relational databases. We provide an approach capable of providing provenance information for a large and significant fragment of SPARQL 1.1, including for the first time the major non-monotonic constructs under multiset semantics. The approach is based on the translation of 
SPARQL into relational queries over annotated relations with values of the most general m-semiring, and in this way also refuting a claim in the literature that the {\tt OPTIONAL} construct of SPARQL cannot be captured appropriately with the known abstract models. 
\end{abstract}

\begin{keywords}
How-provenance, SPARQL queries, m-semirings, difference
\end{keywords}

\section{Introduction}

A general data model for annotated relations has been introduced in~\cite{GKT07:pods}, for positive relational algebra (i.e. excluding the difference operator).
These annotations can be used to check derivability of a tuple, lineage, and provenance, perform query evaluation of incomplete database, etc. 
The main concept is the notion of ${\cal K}$-relations where tuples are annotated with values (tags) of a commutative semiring $\cal K$\@, while positive relational algebra operators semantics
are extended and captured by corresponding compositional operations over $\cal K$\@. 
The obtained algebra on $\cal K$-relations is expressive enough to capture different kinds of annotations with set or bag semantics, 
and the authors show that the semiring of polynomials with integer coefficients is the most general semiring. 
This means that to evaluate queries for any positive algebra query on an arbitrary semiring, one can evaluate the query in the semiring of polynomials (factorization property of~\cite{GKT07:pods}).
This work has been extended to the case of full relational algebra in~\cite{GeertsP10} by considering the notion of semirings with a monus operation ($m$-semirings~\cite{amer84}) and constant annotations, and the factorization property is  proved for the special $m$-semiring that we denote by ${\cal K}_{dprovd}$\@. 

The use of these abstract models based on $\cal K$-relations to express provenance in the Semantic Web has been advocated in~\cite{Theoharis:2011}.
However, the authors claim that the existing $m$-semirings are not capable to capture the appropriate provenance information for SPARQL queries.
This claim is supported by the authors using a simple example, which we have adapted to motivate our work:

\begin{example}\label{ex:motopt} Consider the following RDF graph expressing information about users' accounts and homepages, resorting to the FOAF vocabulary:
{\small
\begin{verbatim}
   @prefix people: <http://people/> .
   @prefix foaf: <http://xmlns.com/foaf/0.1/> .
   people:david foaf:account <http://bank> .
   people:felix foaf:account <http://games> .
   <http://bank> foaf:accountServiceHomepage <http://bank/yourmoney>.
\end{verbatim}}
\noindent The SPARQL query
{\small
\begin{verbatim}
   PREFIX foaf <http://xmlns.com/foaf/0.1/>
   SELECT *
   WHERE { ?who foaf:account ?acc .
           OPTIONAL { ?acc foaf:accountServiceHomepage ?home }
   }  
\end{verbatim}}
\noindent returns the solutions (mappings of variables):

{\tt\small
\begin{tabular}{|l|l|lc|}\hline
?who & ?acc & ?home & \\\hline\hline
<http://people/david> & <http://bank> & <http://bank/yourmoney> &\\\hline
<http://people/felix> & <http://games> & & \\\hline
\end{tabular}\\
}\\
However, if the last triple is absent from the graph then the solutions are instead:\\

{\tt\small
\begin{tabular}{|l|l|lc|}\hline
?who & ?acc & ?home & \\\hline\hline
<http://people/david> & <http://bank> &  &\\\hline
<http://people/felix> & <http://games> & & \\\hline
\end{tabular}\\
}
\end{example}

In order to track provenance of data, each tuple of data can be tagged with an annotation of a semiring. This annotation can be a boolean, e.g. to annotate that the tuple is trusted or not, a set of identifiers of tuples returning lineage of the tuple, or more complex annotations like the polynomials semiring to track full how-provenance~\cite{GKT07:pods,GeertsP10}\@, i.e. how a tuple is generated in the result under bag semantics. \comment{The semantics of standard positive algebra operators (selection $\sigma$, projection $\pi$, union $\cup$, natural join $\bowtie$, and renaming $\rho$) can be naturally extended~\cite{GKT07:pods} to $\cal K$-relations where, briefly, unions are encoded as sums ($\oplus)$ in $\cal K$\@, and joins as products ($\otimes$).}

Returning to the introductory example, assume that we represent the 3 triples in the input RDF graph as the ternary $K_{dprovd}$-relation (with the obvious abbreviations), where the last column contains the triple identifier (annotation):
\[\small
\begin{array}{l@{\quad}ll||l}
\multicolumn{4}{l}{\tt Triples}\\
 {\tt sub} & {\tt pred} & {\tt obj} & \\\hline
\iri{david} & \iri{account} & \iri{bank} & t_1\\
\iri{felix} & \iri{account} & \iri{games} & t_2\\
\iri{bank} & \iri{accountServiceHomepage} & \iri{bank/yourmoney} & t_3\\
\end{array}\\
\]
The expected annotation of the first solution of the SPARQL query is $t_1 \times t_3$\@, meaning that the solution was obtained by joining triples identified by $t_1$ and $t_3$\@, while for the second solution the corresponding annotation is simply $t_2$\@. However, if we remove the last tuple we obtain a different solution for {\tt david} with annotation just $t_1$\@. The authors in~\cite{Theoharis:2011} explain why the existing approaches to provenance for the Semantic Web cannot handle the situation of Example~\ref{ex:motopt}, basically because there are different bindings of variables depending on the absence/presence of triples, and it is claimed that the $m$-semiring $K_{dprovd}$ also cannot handle it. The rest of our paper shows that this last claim is wrong, but that requires some hard work and long definitions since the method proposed relies on the translation of SPARQL queries into relational algebra. The result is the first approach that provides adequate provenance information for {\tt OPTIONAL}, {\tt MINUS} and {\tt NOT EXISTS} constructs under the multiset (bag) semantics of SPARQL.

The organization of the paper is the following. We review briefly in the next section the basics of $K$-relations\@. The SPARQL semantics is introduced in Section~\ref{sec:sparql}, and its translation into relational algebra is the core of the paper and can be found in Section~\ref{sec:sparql2ra}. Using the relational algebra translation of SPARQL, we use $K_{dprovd}$ to annotate SPARQL queries and show in Section~\ref{sec:prov} that Example~\ref{ex:motopt} is properly handled. We finish with some comparisons and conclusions.

\section{Provenance for $K$-relations}

A commutative semiring is an algebraic structure ${\cal K} = (\mathbb{K},\oplus,\otimes,0,1)$ where $(\mathbb{K},\oplus,0)$ is a commutative monoid ($\oplus$ is associative and commutative) with identity element $0$, $(\mathbb{K},\otimes,1)$ is a commutative monoid with identity element $1$, the operation $\otimes$ distributes over $\oplus$, and $0$ is the annihilating element of $\otimes$\@. In general, a tuple is a function $t : U \rightarrow \mathbb{D}$ where $U$ is a finite set of attributes and $\mathbb{D}$ is the domain of values, which is assumed to be fixed. The set of all such tuples is \utup\ and usual relations are subsets of \utup\@.  A $\cal K$-relation over $U$ is a function $R:\utup \rightarrow \mathbb{K}$, and its support is $supp(R) = \{t \mid R(t) \not= 0 \}$\@.  

In order to cover the full relational operators, the authors in~\cite{GeertsP10} assume that the $\cal K$ semiring is naturally ordered (i.e. binary relation $x \preceq y$ is a partial order, where $x \preceq y$ iff there exists $z \in \mathbb{K}$ such that $x \oplus z = y$ ), and require additionally that for every pair $x$ and $y$ there is a least $z$ such that $x \preceq y \oplus z$, defining in this way $x \ominus y$ to be such smallest $z$\@. A $\cal K$ semiring with such a monus operator is designated by $m$-semiring. Moreover, in order to capture duplicate elimination, the authors assume that the $m$-semiring is finitely generated. The query language\footnote{The authors use instead the notation  $\cal{RA}^+_{\cal K}(\setminus,\delta)$\@.} $\cal{RA}^+_{\cal K}(-,\delta)$ has the following operators~\cite{GeertsP10}:
\begin{description}
\item[empty relation:] For any set of attributes $U$, we have $\emptyset:\utup \rightarrow {\mathbb K}$ such that $\emptyset(t)=0$ for any $t$\@.
\item[union] If $R_1,R_2:\utup\rightarrow{\mathbb K}$ then $R_1\cup R_2: \utup \rightarrow {\mathbb K}$ is defined by:\\ $(R_1\cup R_2)(t)=R_1(t)\oplus R_2(t)$\@.
\item[projection] If $R:\utup\rightarrow{\mathbb K}$ and $V \subseteq U$ then $\Pi_V(R):V$-${\tt Tup} \rightarrow{\mathbb K}$ is defined by
$\left(\Pi_V(R)\right)(t) = \bigoplus_{t=t' \text{ on } V \text{ and } R(t') \not= 0} R(t')$\@.
\item[selection:] If $R:\utup\rightarrow{\mathbb K}$  and the selection predicate $P$ maps each $U$-tuple to either $0$ or $1$ depending on
the (in-)equality of attribute values, then $\sigma_P(R):\utup\rightarrow{\mathbb K}$ is defined by $\left(\sigma_P(R)\right)(t)=R(t) \otimes P(t)$\@.
\item[natural join] If $R_i:\mbox{$U_i$-{\tt Tup}}\rightarrow{\mathbb K}$, for $i=1,2$\@,  then $R_1\Join ‰R_2$ is the $\cal K$-relation over $U_1 \cup U_2$ defined by $(R_1 \Join R_2)(t) = R_1(t) \otimes R_2(t)$\@.
\item[renaming] If $R:\utup\rightarrow{\mathbb K}$ and $\beta:U \rightarrow U'$ is a bijection then $\rho_\beta(R)$ is the $\cal K$-relation over $U'$ defined by $(\rho_\beta(R))(t) = R(t \circ \beta^{-1})$\@.
\item[difference:] If $R_1,R_2:\utup\rightarrow{\mathbb K}$ then $R_1- R_2: \utup \rightarrow {\mathbb K}$ is defined by:\\ $(R_1- R_2)(t)=R_1(t)\ominus R_2(t)$\@.
\item[constant annotation:] If $R:\utup\rightarrow{\mathbb K}$  and $k_i$ is a generator of $\mathbb K$ then $\delta_{k_i}:\utup\rightarrow{\mathbb K}$ is defined by $(\delta_{k_i}(R))(t)=k_i$ for each $t \in supp(R)$ and  $(\delta_{k_i}(R))(t)=0$ otherwise.
\end{description}

One major result of~\cite{GeertsP10} is that the factorization property can be obtained for $\cal{RA}^+_{\cal K}(-,\delta)$ by using a special $m$-semiring with constant annotations that we designate by ${\cal K}_{dprovd}$\@.  ${\cal K}_{dprovd}$ is the free $m$-semiring over the set of source tuple ids $X$\@, which is a free algebra generated by the set of (tuple) identifiers in the equational variety of $m$-semirings. Elements of ${\cal K}_{dprovd}$ are therefore terms defined inductively as: identifiers in $X$, $0$, and $1$ are terms;  if $s$ and $t$ are terms then $(s+t)$, $(s \times t)$\@, $(s-t)$\@, and $\delta_{k_i}(t)$ are terms, and nothing else is a term. In fact annotations of  ${\cal K}_{dprovd}$ are elements of the quotient structure of the free terms with respect to the congruence relation induced by the axiomatization of the $m$-semirings, in order to guarantee the factorization property  (see~\cite{GeertsP10} for more details). In our approach, $X$ will be the set of graph and tuple identifiers.

We slightly extend the projection operator, by introducing new attributes whose value can be computed from the other attributes. In our approach, this is simply syntactic sugar since the functions we use are either constants or return one of the values in the arguments.

\section{SPARQL semantics}\label{sec:sparql}

The current draft of SPARQL 1.1~\cite{sparql11:rec} defines the semantics of SPARQL queries via a translation into SPARQL algebra operators, which are then evaluated with respect to a given RDF dataset. In this section, we overview an important fragment corresponding to an extension of the work in~\cite{Perez:2009} that presents the formal semantics of the first version of SPARQL. The aim of our paper is on the treatment of non-monotonic constructs\footnote{OPTIONAL is non-monotonic when combined with some FILTER expressions,  or if we understand unbound variables incomparable to bounded ones like we assume in the translation.} of SPARQL, namely OPTIONAL, MINUS and NOT EXISTS, and thus we focus in the SELECT query form, ignoring property paths, GROUP graph patterns and aggregations, as well as solution modifiers. The extension of our work to consider all the graph patterns is direct from the results presented. Regarding FILTER expressions, we analyse with detail the EXISTS and NOT EXISTS constructs, requiring special treatment.
We assume the reader has basic knowledge of RDF and we follow closely the presentation of~\cite{sparql11:rec}. For more details the reader is referred to sections 17 and 18 of the current SPARQL 1.1 W3C recommendation. We also make some simplifying assumptions that do not affect the results of our paper. 

\subsection{Basics}

Consider disjoint sets of IRI (absolute) references {\bf I}, blank nodes {\bf B}, and literals {\bf L} including plain literals and typed literals, and an infinite set of variables {\bf V}. The set of RDF terms is $\mathbf{T} = \mathbf{IBL} = {\mathbf I} \cup {\mathbf B} \cup {\mathbf L}$\@. A triple\footnote{Literals in the subject of triples are allowed, since this generalization is expected to be adopted in the near future.} $\tau = (s,p,o)$ is an element of $\mathbf{IBL} \times {\mathbf I} \times \mathbf{IBL}$ and a graph is a set of triples. Queries are evaluated with respect to a given RDF Dataset $D=\{ G, (\iri{u_1}, G_1), (\iri{u_2}, G_2), \ldots, (\iri{u_n}, G_n) \}$\@, where $G$ is the default graph, and each pair $(\iri{u_i}, G_i)$ is called a named graph, with each IRI $\tt u_i$  distinct in the RDF dataset, and $G_i$ being a graph.

\subsection{Graph patterns}

SPARQL queries are defined by graph patterns, which are obtained by combining triple patterns with operators. SPARQL graph patterns are defined recursively as follows:
\begin{itemize}
\item The empty graph pattern $()$\@.
\item A tuple $(\mathbf{IL} \cup \mathbf{V} ) \times (\mathbf{I} \cup \mathbf{V} ) \times (\mathbf{IL} \cup \mathbf{V} )$ is a graph pattern called triple pattern\footnote{For simplicity, we do not allow blank nodes in triple patterns.};
\item If $P_1$ and $P_2$ are graph patterns then $(P_1 {\tt\ AND\ } P_2)$\@, $(P_1 {\tt\ UNION\ } P_2)$\@,  as well as $(P_1 {\tt\ MINUS\ } P_2)$\@, and $(P_1 {\tt\ OPTIONAL\ } P_2)$\@ are graph patterns;
\item If $P_1$ is a graph pattern and $R$ is a filter SPARQL expression\footnote{For the full syntax of filter expressions, see the W3C recommendation~\cite{sparql11:rec}.} then  the construction $(P_1 {\tt\ FILTER\ } R)$ is a graph pattern;
\item If $P_1$ is a graph pattern and $term$ is a variable or an IRI then $({\tt GRAPH\ } term\ P_1)$ is a graph pattern.
\end{itemize}

The SPARQL 1.1 Working Draft also defines Basic Graph Patterns (BGPs), which correspond to sets of triple patterns. A Basic Graph Pattern ${P_1, \ldots, P_n}$ is encoded as the graph pattern $(() {\tt\ AND\ } (P_1  {\tt\ AND\ } (P_2 \ldots {\tt\ AND\ } P_n))\ldots)$\@. We ignore in this presentation the semantics of {\tt FILTER} expressions, whose syntax is rather complex. For the purposes of this paper it is enough to consider that these expressions after evaluation return a boolean value, and therefore we also ignore errors. However, we show how to treat the {\tt EXISTS} and {\tt NOT EXISTS} patterns in {\tt FILTER} expressions since these require querying graph data, and therefore provenance information should be associated to these patterns.

\subsection{SPARQL algebra}

Evaluation of SPARQL patterns return multisets (bags) of solution mappings. A solution mapping, abbreviated solution, is a partial function $\mu: \mathbf{V} \rightarrow \mathbf{T}$\@. The domain of $\mu$ is the subset of variables of $\mathbf{V}$ where $\mu$ is defined. Two mappings $\mu_1$ and $\mu_2$ are compatible if for every variable $v$ in $dom(\mu_1) \cap dom(\mu_2)$ it is the case that $\mu_1(v) = \mu_2(v)$\@. It is important to understand that any mappings with disjoint domain are compatible, and in particular the  solution mapping $\mu_0$ with empty domain is compatible with every solution. If two solutions $\mu_1$ and $\mu_2$ are compatible then their union $\mu_1 \cup \mu_2$ is also a solution mapping. We represent extensionally a solution mapping as a set of pairs of the form $(v,t)$\@; in the case of a solution mapping with a singleton domain we use the abbreviation $v \rightarrow t$\@.
Additionally, if $P$ is an arbitrary pattern we denote by $\mu(P)$ the result of substituting the variables in $P$ defined in $\mu$ by their assigned values.

 We denote that solution mapping $\mu$ satisfies the filter expression $R$ with respect to the active graph $G$ of dataset $D$ by $\mu \models_{D(G)} R$\@. Including the parameter $D(G)$ in the evaluation of filter expressions is necessary in order to evaluate ${\tt EXISTS}(P)$ and ${\tt NOT\ EXISTS}(P)$ filter expressions, where $P$ is an arbitrary graph pattern. If these constructs are removed from the language, then one only needs to consider the current solution mapping to evaluate expressions (as done in~\cite{Perez:2009}).

\begin{definition}[SPARQL algebra operators~\cite{sparql11:rec}]
Let $\Omega_1$ and $\Omega_2$ be multisets of solution mappings, and $R$ a filter expression. Define:
\begin{description}
\item[Join:] $\Omega_1 \bowtie \Omega_2 = \{\!|\mu_1 \cup \mu_2 \mid \mu_1 \in \Omega_1 \text{ and }  \mu_2 \in \Omega_2 \text{ such that } \mu_1 \text{ and } \mu_2$ $\text{ are compatible }|\!\}$
\item[Union:] $\Omega_1 \cup \Omega_2 = \{\!|\mu \mid \mu \in \Omega_1 \text{  or }  \mu \in \Omega_2|\!\}$
\item[Minus:] $\Omega_1 - \Omega_2 = \{\!|\mu_1  \mid \mu_1 \in \Omega_1 \text{ such that }  \forall_{\mu_2 \in \Omega_2}  \text{ either } \mu_1 \text{ and } \mu_2$ are not compatible $\text{or } dom(\mu_1) \cap dom(\mu_2) = \emptyset |\!\}$
\item[Diff:] $\Omega_1 \setminus^{D(G)}_R\ \Omega_2 = \{\!|\mu_1  \mid \mu_1 \in \Omega_1 \text{ such that }  \forall_{\mu_2 \in \Omega_2}  \text{ either } \mu_1 \text{ and } \mu_2$ are not compatible, $\text{or } \mu_1 \text{ and } \mu_2 \text{ are compatible and  } \mu_1 \cup \mu_2 \not\models_{D(G)} R  |\!\}$
\item[LeftJoin:] $\Omega_1 \sqsupset\!\bowtie^{D(G)}_R \Omega_2 = (\Omega_1 \bowtie \Omega_2) \cup (\Omega_1 \setminus^{D(G)}_R \Omega_2)$
\end{description}
\end{definition}

 The {\bf Diff} operator is auxiliary to the definition of {\bf LeftJoin}. The SPARQL 1.1 Working Draft also introduces the notion of sequence to provide semantics to modifiers like {\tt ORDER BY}. The semantics of the extra syntax is formalized by several more operators, namely aggregates and sequence modifiers (e.g. ordering), as well as property path expressions; we briefly discuss their treatment later on. Since lists can be seen as multisets with order and, without loss of generality regarding provenance information, we just consider multisets. 

\begin{definition}[SPARQL graph pattern evaluation] Let $D(G)$ be an RDF dataset with active graph $G$\@, initially the default graph in $D(G)$\@. Let $P$, $P_1$ and $P_2$ be arbitrary graph patterns, and $t$ a triple pattern. The evaluation of a graph pattern over $D(G)$\@, denoted by $\evalexpr{.}_{D(G)}$ is defined recursively as follows:
\begin{enumerate}
\item $\evalexpr{()}_{D(G)} = \{\!| \mu_0 |\!\}$\@;
\item $\evalexpr{t}_{D(G)} = \{\!|\ \mu \mid dom(\mu) = var(t) \text{ and } \mu(t) \in G |\!\}$\@, where $var(t)$ is the set of variables occurring in the triple pattern $t$\@;
\item $\evalexpr{(P_1  {\tt\ AND\ } P_2)}_{D(G)} = \evalexpr{P_1}_{D(G)} \bowtie  \evalexpr{P_2}_{D(G)}$\@;
\item $\evalexpr{(P_1  {\tt\ UNION\ } P_2)}_{D(G)} = \evalexpr{P_1}_{D(G)} \cup  \evalexpr{P_2}_{D(G)}$\@;
\item $\evalexpr{(P_1  {\tt\ MINUS\ } P_2)}_{D(G)} = \evalexpr{P_1}_{D(G)} -  \evalexpr{P_2}_{D(G)}$\@;
\item $\evalexpr{(P_1  {\tt\ OPTIONAL\ } P_2)}_{D(G)} = \evalexpr{P_1}_{D(G)} \sqsupset\!\bowtie^{D(G)}_{true}  \evalexpr{P_2}_{D(G)}$\@, where $P_2$ is not a {\tt FILTER} pattern;
\item $\evalexpr{(P_1  {\tt\ OPTIONAL\ } (P_2 {\tt\ FILTER\ } R))}_{D(G)} = \evalexpr{P_1}_{D(G)} \sqsupset\!\bowtie^{D(G)}_R  \evalexpr{P_2}_{D(G)}$\@;
\item $\evalexpr{(P_1  {\tt\ FILTER\ } R)}_{D(G)} = \{\!|\ \mu \in \evalexpr{P_1}_{D(G)} \mid \mu \models_{D(G)} R\ |\!\}$\@;
\item Evaluation of $\evalexpr{({\tt GRAPH\ } term\ P_1)}_{D(G)}$  depends on the form of $term$\@:
\begin{itemize}
\item If $term$ is an IRI corresponding to a graph name ${\tt u_i}$ in $D(G)$ then $\evalexpr{({\tt GRAPH\ } term\ P_1)}_{D(G)} = \evalexpr{P_1}_{D(G_i)}$\@;
\item If $term$ is an IRI that does not correspond to any graph in $D(G)$ then $\evalexpr{({\tt GRAPH\ } term\ P_1)}_{D(G)} = \{\!| |\!\}$\@;
\item If $term$ is a variable $v$ then $\evalexpr{({\tt GRAPH\ } term\ P_1)}_{D(G)} =$
\[ =  ( \evalexpr{P_1}_{D(G_1)} \bowtie \{\!| v \rightarrow \iri{u_1} |\!\}) \cup \ldots \cup ( \evalexpr{P_1}_{D(G_n)} \bowtie \{\!| v \rightarrow \iri{u_n} |\!\} ) \]
\end{itemize}
\end{enumerate}
\end{definition}

The evaluation of {\tt EXISTS} and {\tt NOT EXISTS} is performed in the satisfies relation of filter expressions. 

\begin{definition} Given a solution mapping $\mu$ and a graph pattern $P$  over an RDF dataset $D(G)$ then $\mu \models_{D(G)} {\tt\ EXISTS} (P)$ (resp. $\mu \models_{D(G)} {\tt\ NOT\ EXISTS}(P)$) iff $\evalexpr{\mu(P)}_{D(G)}$ is a non-empty (resp. empty) multiset.
\end{definition}
           
\begin{example}\label{ex:query} The SPARQL query of Example~\ref{ex:motopt} corresponds to the following graph pattern :
\[
\begin{array}{rl}
Q = ( & (?who, \iri{foaf:account}, ?acc ) {\tt\ OPTIONAL\ }\\
  & (?acc, \iri{foaf:accountServiceHomepage}, ?home )\\
) &\\
\end{array}
\]
The evaluation of the query result with respect to the RDF dataset $D=\{G\}$, just containing the default graph $G$, specified in the example is:
\[\small
\begin{array}{l}
\evalexpr{Q}_{D(G)} = \\
\ \evalexpr{(?who, \iri{foaf:account}, ?acc )}_{D(G)} \sqsupset\!\bowtie^{D(G)}_{true}\\
\ \evalexpr{(?acc, \iri{foaf:accountServiceHomepage}, ?home )}_{D(G)}\\
= \{\!| \{(?who,\iri{http://people/david}),(?acc,\iri{http://bank}) \}, \\
\ \ \ \ \ \{(?who,\iri{http://people/felix}),(?acc,\iri{http://games}) \} |\!\}  \sqsupset\!\bowtie^{D(G)}_{true}\\
\ \ \ \{\!| \{(?acc,\iri{http://bank}),(?home,\iri{http://bank/yourmoney}) \}\ |\!\}\\
= \{\!| \{(?who,\iri{http://people/david}),(?acc,\iri{http://bank}),\\
\ \ \ \ \ \ \ (?home,\iri{http://bank/yourmoney}) \}, \\
\ \ \ \ \ \{(?who,\iri{http://people/felix}),(?acc,\iri{http://games}) \}\\
\ \ \ |\!\}\\
\end{array}
\]
The evaluation of query $Q$ returns, as expected, two solution mappings.
\end{example}

\section{Translating SPARQL algebra into relational algebra}\label{sec:sparql2ra}

The rationale for obtaining how-provenance for SPARQL is to represent each solution mapping as a tuple of a relational algebra query  constructed from the original SPARQL graph pattern. The construction is intricate and fully specified, and is inspired from the translation of full SPARQL 1.0 queries into SQL, as detailed in~\cite{Elliott:2009}, and into Datalog in~\cite{Polleres07}. Here, we follow a similar strategy but for simplicity of presentation we assume that a given RDF dataset $D=\{ G_0, (\iri{u_1}, G_1), (\iri{u_2}, G2), \ldots, (\iri{u_n}, G_n) \}$  is represented by the two relations: {\tt Graphs(gid,IRI)}  and  {\tt Quads(gid,sub,pred,obj)}\@. The former stores information about the graphs in the dataset $D$ where {\tt gid} is a numeric graph identifier, and {\tt IRI} an IRI reference. The relation {\tt Quads} stores the triples of every graph in the RDF dataset. Different implementations may  immediately adapt the translation provided here in this section to their own schema.

Relation {\tt Graphs(gid,IRI)}  contains a tuple $(i,\iri{u_i})$ for each named graph $(\iri{u_i}, G_i)$, and the tuple $(0,\iri{})$ for the default graph, while relation {\tt Quads(gid,sub,pred,obj)} stores a tuple of the form $(i,s,p,o)$ for each triple $(s,p,o) \in G_i$\footnote{For simplicity {\tt sub}, {\tt pred}, and {\tt obj} are text attributes storing lexical forms of the triples' components. We assume that datatype literals have been normalized, and blank nodes are distinct in each graph. The only constraint is that different RDF terms must be represented by different strings; this can be easily guaranteed.}. With this encoding, the default graph  always has identifier $0$\@, and all the graph identifiers are consecutive integers.

It is also assumed the existence of a special value {\tt unb}, distinct from the encoding of any RDF term, to represent that a particular variable is unbound in the solution mapping. This is required in order to be able to represent solution mappings as tuples with fixed and known arity. Moreover, we assume that the variables are totally ordered (e.g. lexicographically). The translation requires the full power of relational algebra, and notice that bag semantics is assumed (duplicates are allowed) in order to obey to the cardinality restrictions of SPARQL algebra operators~\cite{sparql11:rec}.

\begin{definition}[Translation of triple patterns] Let $t=(s,p,o)$ be a triple pattern and $G$ an attribute. Its translation $\transexpr{(s,p,o)}^G$ into relational algebra is constructed from relation {\tt Quads} as follows:
\begin{enumerate}
\item Select the tuples with the conjunction obtained from the triple pattern by letting ${\tt Quads.sub} = s$ (resp. ${\tt Quads.pred} = p$, ${\tt Quads.obj} = o$) if $s$ (resp. $p$, $o$) are RDF terms; if a variable occurs more than once in $t$\@, then add an equality condition among the corresponding columns of {\tt Quads};
\item Rename {\tt Quads.gid} as $G$; rename as many as {\tt Quads} columns as distinct variables that exist in $t$\@, such that there is exactly one renamed column per variable;
\item Project in $G$ and variables occurring in $t$\@;
\end{enumerate}
The empty graph pattern is translated as $\transexpr{()}^G = \Pi_G\left[  \rho_{G \leftarrow {\tt gid}} ({\tt Graphs}) \right]$\@.
\end{definition}

\begin{example} Consider the following triple patterns:
\[
\begin{array}{l}
t_1 = (?who, \iri{http://xmlns.com/foaf/0.1/account}, ?acc )\\
t_2 = (?who, \iri{http://xmlns.com/foaf/0.1/knows}, ?who )\\
t_3 = (\iri{http://cd}, \iri{http://xmlns.com/foaf/0.1/name}, \text{\dqt Carlos\dqt@pt} )\\
\end{array}
\]
The corresponding translations into relational algebra are:
\[
\begin{array}{lll}
\transexpr{t_1}^G & = & \Pi_{G,acc,who}\left[ \rho_{\scriptsize\begin{array}{l}G \leftarrow {\tt gid}\\ acc \leftarrow {\tt obj}\\  who \leftarrow {\tt sub}\end{array}} \left({\scriptsize\sigma_{{\tt pred} = \iri{http://xmlns.com/foaf/0.1/account} }}({\tt Quads}) \right) \right]\\
\transexpr{t_2}^G & = & \Pi_{G,who}\left[ \rho_{\scriptsize\begin{array}{l}G \leftarrow {\tt gid}\\ who \leftarrow {\tt sub}\end{array}} \left(\sigma_{\scriptsize\begin{array}{c}{\tt pred} = \iri{http://xmlns.com/foaf/0.1/knows}\\ \wedge\\ {\tt sub } = {\tt obj} \end{array} }({\tt Quads}) \right) \right]\\
\transexpr{t_3}^G & = & \Pi_{G}\left[ \rho_{\scriptsize\begin{array}{l}G \leftarrow {\tt gid}\end{array}} \left(\sigma_{\scriptsize\begin{array}{c}{\tt sub} = \iri{http://cd} \wedge\\ {\tt pred} = \iri{http://xmlns.com/foaf/0.1/name} \wedge\\ {\tt obj } = \text{\dqt Carlos\dqt@pt}\end{array} }({\tt Quads}) \right) \right]\\
\end{array}
\]
\end{example}

The remaining pattern that requires querying base relations is {\tt GRAPH}:
\begin{definition}[Translation of {\tt GRAPH} pattern] Consider the graph pattern $({\tt GRAPH\ } term\ P_1)$ and let $G'$ be a new attribute name.
\begin{itemize}
\item If $term$ is an IRI then $\transexpr{({\tt GRAPH\ } term\ P_1)}^G$ is
\[
\transexpr{()}^G \Join \Pi_{var(P_1)} \left[ \Pi_{G'} \left( \rho_{\scriptsize G' \leftarrow {\tt gid}} \left( \sigma_{term = {\tt IRI}}( {\tt Graphs}) \right) \right)\Join \transexpr{P_1}^{G'} \right]
\]
\item If $term$ is a variable $v$ then $\transexpr{({\tt GRAPH\ } term\ P_1)}^G$ is
\[
\transexpr{()}^G \Join \Pi_{\{v\} \cup var(P_1)} \left[ \rho_{\scriptsize G' \leftarrow {\tt gid}, v \leftarrow {\tt IRI}} \left( \sigma_{{\tt gid} > 0}( {\tt Graphs}) \right)\Join \transexpr{P_1}^{G'} \right]
\]
\end{itemize}
\end{definition}

Notice that the relational algebra query resulting from the translation of the pattern graph $P_1$ renames and hides the graph attribute. The join of the empty pattern is included in order to guarantee that each  query returns  the graph identifier in the first ``column''.

\begin{definition}[Translation of the {\tt UNION} pattern] Consider the graph pattern $(P_1  {\tt\ UNION\ } P_2)$\@. The relation algebra expression $\transexpr{(P_1  {\tt\ UNION\ } P_2)}^G$  is:
\[
\begin{array}{c}
\Pi_{G,var(P_1) \cup \{ v \leftarrow {\tt unb} \mid v \in var(P_2) \setminus var(P_1)\}} \left( \transexpr{P_1}^G \right)\\
\bigcup\\
\Pi_{G,var(P_2) \cup \{ v \leftarrow {\tt unb} \mid v \in var(P_1) \setminus var(P_2)\}} \left( \transexpr{P_2}^G \right)\\
\end{array}
\]  
\end{definition}

The union operator requires the use of an extended projection in order to make unbound variables which are present in one pattern but not in the other.
The ordering of the variables in the projection must respect the total order imposed in the variables. This guarantees that the attributes are the same and by the same order in the resulting argument expressions of the union operator.

\begin{definition}[Translation of the {\tt AND} pattern] Consider the graph pattern $(P_1  {\tt\ AND\ } P_2)$ and let $var(P_1) \cap var(P_2) = \{ v_1, \ldots, v_n\}$ (which may be empty). The relational algebra expression $\transexpr{(P_1  {\tt\ AND\ } P_2)}^G$\@ is
\[
\begin{array}{l@{}l}
\Pi_{\scriptsize\begin{array}{l}G,\\var(P_1) - var(P_2),\\var(P_2) - var(P_1),\\ v_1 \leftarrow first(v'_1,v''_1),\ldots,\\ v_n \leftarrow first(v'_n,v''_n)\end{array}} & \left[ \sigma_{comp} \left( \rho_{\scriptsize\begin{array}{c}v'_1 \leftarrow v_1\\ \vdots\\ v'_n \leftarrow v_n\\\end{array}} \left(\transexpr{P_1}^G\right)  \Join  \rho_{\scriptsize\begin{array}{c}v''_1 \leftarrow v_1\\ \vdots\\ v''_n \leftarrow v_n\\\end{array}} \left(\transexpr{P_2}^G \right) \right) \right]
\end{array}
\] 
where $comp$ is a conjunction of conditions $v'_i = {\tt unb} \vee v''_i = {\tt unb} \vee v'_i = v''_i$ for each variable $v_i (1 \leq i \leq n)$\@. The function $first$ returns the first argument which is not {\tt unb}\@, or {\tt unb} if both arguments are {\tt unb}\@. 
Note that if the set of common variables is empty then the relational algebra expression simplifies to:
\[
\Pi_{G,var(P_1) \cup var(P_2)} \left[ \transexpr{P_1}^G  \Join \transexpr{P_2}^G \right]
\]
\end{definition}

We need to rename common variables in both arguments, since an unbound variable is compatible with any bound or unbound value  in order to be able to check compatibility using a selection (it is well-known that the semantics of {\tt unb} is different from  semantics of NULLs in relational algebra). The use of the $first$ function in the extended projection is used to obtain in the solution the bound value of the variable, whenever it exists. This technique is the same with that used in~\cite{Elliott:2009,Polleres07}. The use of the extended projection is not essential, since it can be translated into a more complex relational algebra query by using an auxiliary relation containing a tuple for each pair of compatible pairs of variables.

\begin{definition}[Translation of the {\tt MINUS} pattern] Consider the graph pattern $(P_1  {\tt\ MINUS\ } P_2)$ and let $var(P_1) \cap var(P_2) = \{ v_1, \ldots, v_n\}$ (which may be empty). The relational algebra expression  $\transexpr{(P_1  {\tt\ MINUS\ } P_2)}^G$\@ is
\[
\transexpr{P_1}^G  \Join \left[ \delta\left( \transexpr{P_1}^G \right) - \Pi_{G,var(P_1)} \left[ \sigma_{comp  \wedge \neg disj} \left( \transexpr{P_1}^G  \Join  \rho_{\scriptsize\begin{array}{c}v'_1 \leftarrow v_1\\ \vdots\\ v'_n \leftarrow v_n\\\end{array}} \left(\transexpr{P_2}^G \right) \right) \right]\right]
\]
where $comp$ is a conjunction of conditions $v_i = {\tt unb} \vee v'_i = {\tt unb} \vee v_i = v'_i$ for each variable $v_i (1 \leq i \leq n)$\@, and $disj$ is the conjunction of conditions $v_i = {\tt unb} \vee v'_i = {\tt unb}$\@ for each variable $v_i (1 \leq i \leq n)$\@.
Note that if the set of common variables is empty then the above expression reduces to $\transexpr{P_1}^G$ since $disj=true$\@.
\end{definition}

This is the first of the non-monotonic SPARQL patterns, and deserves some extra explanation. We need to check dynamically if the domains of variables are  disjoint since we do not know at translation time what are the unbound variables in the solution mappings, except when trivially the arguments of {\tt MINUS} do not share any variable. The expression on the right hand side of the difference operator returns a tuple corresponding to a solution mapping $\mu_1$ of $P_1$ whenever it is possible to find a solution mapping $\mu_2$ of $P_2$ that it is compatible with $\mu_1$ (condition $comp$) and the mappings do not have disjoint domains (condition $\neg disj$). By deleting these tuples (solutions) from solutions of $P_1$ we negate the condition, and capture the semantics of the {\tt MINUS} operator.
The use of the duplicate elimination $\delta$ ensures that only one tuple is obtained for each solution mapping, in order to guarantee that the cardinality of the result is as what is specified by SPARQL semantics: each tuple in $\transexpr{P_1}^G$ joins with at most one tuple (itself) resulting from the difference operation. 

\begin{definition}[Translation of {\tt FILTER} pattern] Consider the graph pattern $(P {\tt\ FILTER\ } R)$\@, and let ${\tt [NOT]\ EXISTS}(P_1)$\@, \ldots, ${\tt [NOT]\ EXISTS}(P_m)$ the {\tt EXISTS} or {\tt NOT EXISTS} filter expressions occurring in R (which might not occur)\@. The relational algebra expression $\transexpr{(P  {\tt\ FILTER\ } R)}^G$\@ is
\[
\Pi_{G,var(P)}\left[\sigma_{filter} \left( \transexpr{P}^G \Join E_1 \Join \ldots \Join E_m \right)\right]
\]
where $filter$ is a condition obtained from $R$  where each occurrence of ${\tt EXISTS}(P_i)$ (resp. ${\tt NOT\ EXISTS}(P_i)$) is substituted by condition $ex_i <> 0$ (resp. $ex_i = 0$)\@, where $ex_i$ is a new attribute name. Expression $E_i (1 \leq i \leq m)$ is:
\[
\begin{array}{c}
\Pi_{G, var(P),ex_i \leftarrow 0} \left[ \delta(P') - \Pi_{G,var(P)}\left( \sigma_{subst}\left( P' \Join \rho_{\scriptsize\begin{array}{c}v'_1 \leftarrow v_1\\ \vdots\\ v'_n \leftarrow v_n\\\end{array}} (P'_i) \right)\right)\right]\\
\bigcup\\
\Pi_{G, var(P),ex_i \leftarrow 1} \left[ \delta(P') - \left[ \delta(P') - \Pi_{G,var(P)}\left( \sigma_{subst}\left(P' \Join \rho_{\scriptsize\begin{array}{c}v'_1 \leftarrow v_1\\ \vdots\\ v'_n \leftarrow v_n\\\end{array}} (P'_i) \right)\right) \right] \right]
\end{array}
\]
where $P'=\transexpr{P}^G$, $P_i'=\transexpr{P_i}^G$, and $\mathit{subst}$ is the conjunction of conditions $v_i = v'_i \vee v_i = {\tt unb}$ for each variable $v_i$ in $var(P) \cap var(P_i) = \{ v_1, \ldots, v_n\}$\@.
Note that if there are no occurrences of {\tt EXISTS} patterns, then $\transexpr{(P  {\tt\ FILTER\ } R)}^G$ is $\sigma_{R} \left( \transexpr{P}^G \right)$\@.
\end{definition}

The translation of {\tt FILTER} expressions turns out to be very complex due to the {\tt EXISTS} patterns. For each exists expression we need to introduce an auxiliary expression returning a unique tuple for each solution mapping of $P$\@, the top expression when the pattern $P_i$ does not return any solution, and the bottom expression when it does. We need the double negation in order to not affect the cardinality of the results of the filter operation when  pattern $P$ returns more than one solution. Obviously, our translation depends on the capability of expressing arbitrary SPARQL conditions as relational algebra conditions; this is not  immediate but assumed possible due to the translation provided in~\cite{Elliott:2009}.

We can now conclude our translation by taking care of the {\tt OPTIONAL} graph pattern, since it depends on the translation of filter patterns:
\begin{definition}[Translation of {\tt OPTIONAL\ } pattern] Consider the graph pattern $(P_1  {\tt\ OPTIONAL\ } (P_2 {\tt\ FILTER\ } R))$\@.\\ The relational algebra expression  $\transexpr{(P_1  {\tt\ OPTIONAL\ } (P_2 {\tt\ FILTER\ } R))}^G$\@ is
\[
\begin{array}{c}
\transexpr{(P_1  {\tt\ AND\ } P_2)}^G\\
\bigcup\\
\Pi_{G,\scriptsize var(P_1) \cup \{ v \leftarrow {\tt unb} \mid v \in var(P_2) \setminus var(P_1)\}} \\
\\
\left[\transexpr{P_1}^G \Join \left(\begin{array}{c} \delta\left(\transexpr{P_1}^G\right)\\ -\\  \Pi_{G,var(P_1)} \left( \transexpr{(P_1  {\tt\ AND\ } P_2) {\tt\ FILTER\ } R}^G\right) \end{array}\right)\right]
\end{array}
\]
The translation of $(P_1  {\tt\ OPTIONAL\ } P_2)$ is obtained from the translation of the graph pattern $(P_1  {\tt\ OPTIONAL\ } (P_2 {\tt\ FILTER\ } true))$\@.
\end{definition}

The translation of the {\tt OPTIONAL} pattern has two parts, one corresponding to the {\tt JOIN} operator (top expression) and one corresponding to the {\bf Diff}  operator. The translation of the {\bf Diff} operator uses the same technique as the {\tt MINUS} operator but now we remove from solutions of $P_1$ those solution mappings of $P_1$ that are compatible with a mapping of $P_2$ and that satisfy the filter expression.

\begin{theorem}[Correctness of translation]\label{th:map2ra} Given a graph pattern $P$ and a RDF dataset $D(G)$ the process of evaluating the query is performed as follows:
\begin{enumerate}
\item Construct the base relations {\tt Graphs} and {\tt Quads} from $D(G)$\@;
\item\label{expr:query} Evaluate $\transexpr{SPARQL(P,D(G),V)} = \Pi_{V}\left[\sigma_{G'=0}\left( \transexpr{()}^{G'} \Join \transexpr{P}^{G'} \right)\right]$\@ with respect to the base relations {\tt Graphs} and {\tt Quads}, where $G'$ is a new attribute name and $V \subseteq var(P)$\@.
\end{enumerate}
Moreover, the tuples of relational algebra query~(\ref{expr:query}) are in one-to-one correspondence with the solution mappings of $\evalexpr{P}_{D(G)}$ when $V = var(P)$\@, and where an attribute mapped to {\tt unb} represents that the corresponding variable does not belong to the domain of the solution mapping.
\end{theorem}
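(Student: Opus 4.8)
\section*{Proof plan}

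The plan is to prove the statement by structural induction on the graph pattern $P$, carrying along the active graph throughout the induction. The key is to fix a precise \emph{representation invariant} connecting tuples and solution mappings: say that a tuple $r$ over the attributes $\{G'\}\cup var(P)$ \emph{represents} the solution mapping $\mu$ at graph identifier $i$ when $r(G')=i$, $r(v)=\mu(v)$ for every $v\in dom(\mu)$, and $r(v)={\tt unb}$ for every $v\in var(P)\setminus dom(\mu)$. I would then prove, as the main inductive claim $I(P)$, that for \emph{every} graph identifier $i$ occurring in the dataset, the sub-multiset $\{\!|\,r\in\transexpr{P}^{G'} \mid r(G')=i\,|\!\}$ is in multiplicity-preserving bijection, via ``represents'', with $\evalexpr{P}_{D(G_i)}$. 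Stating the invariant relative to all graph identifiers simultaneously is what makes the {\tt GRAPH} case go through, since that operator switches the active graph and then the outer join with $\transexpr{()}^{G'}$ reattaches every active-graph identifier. The top-level statement follows by instantiating $I(P)$ at $i=0$ inside $\Pi_V[\sigma_{G'=0}(\transexpr{()}^{G'}\Join\transexpr{P}^{G'})]$: the selection isolates the default graph $G=G_0$, and with $V=var(P)$ the projection is injective on representing tuples, so the bijection of $I(P)$ transports directly to the claimed one-to-one correspondence, with ${\tt unb}$ marking exactly the variables outside the domain.

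The base and monotone cases are routine verifications of $I(P)$ against the operator definitions. For $()$ the translation yields one empty-domain tuple per graph, matching $\{\!|\mu_0|\!\}$. For a triple pattern $t$, the selection conditions encode matching of constants and of repeated variables, so the surviving renamed quads of graph $i$ are in bijection with $\{\!|\mu \mid dom(\mu)=var(t),\ \mu(t)\in G_i|\!\}$, each with multiplicity one since $G_i$ is a set; no column is ${\tt unb}$ because every variable of $t$ is bound. For {\tt UNION} the extended projection pads the variables of the other branch with ${\tt unb}$ (respecting the total order so the two schemas coincide), and bag union is multiset sum, matching the operator. For {\tt AND} the shared-variable renaming, the condition $comp$, and the $first$ merge faithfully mirror compatibility and union of mappings, while the common $G$ column forces the join to stay within a single active graph; multiplicities multiply over compatible pairs as required. {\tt GRAPH} is handled by the active-graph bookkeeping described above.

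The non-monotone cases are where the work concentrates, and I would treat them together since they share a gadget. In each of {\tt MINUS}, the {\tt EXISTS}/{\tt NOT EXISTS} fragment of {\tt FILTER}, and the {\bf Diff} half of {\tt OPTIONAL}, the relation under the $\ominus$ has the shape $\delta(\transexpr{P_1}^{G'}) - S$, where $\delta$ forces the left operand to have multiplicity at most one on each distinct solution. The crux is to show that, at this unit multiplicity, the monus behaves as a Boolean ``delete if present'': $1\ominus S(r)=0$ when $S(r)\ge 1$ and $=1$ when $S(r)=0$, so the difference retains exactly the distinct $P_1$-solutions for which the disqualifying condition ($comp\wedge\neg disj$ for {\tt MINUS}; a compatible filter-satisfying $\mu_2$ for {\bf Diff}) fails. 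Re-joining with $\transexpr{P_1}^{G'}$ then restores the original multiplicities one solution at a time, which is precisely why the cardinalities agree with SPARQL. For {\tt FILTER} I would further verify that the double negation $\delta(P')-[\delta(P')-X]$ computes support-intersection, so that each auxiliary relation $E_i$ tags every solution of $P$ with $ex_i=1$ exactly when $\mu(P_i)$ is satisfiable and $ex_i=0$ otherwise, that $subst$ correctly encodes substitution of $\mu$ on shared variables, and that joining with the $E_i$ leaves the multiplicity of $\transexpr{P}^{G'}$ untouched. The {\tt OPTIONAL} case then reduces to the already-proved {\tt AND} (its Join half) and to this {\bf Diff} analysis invoking the {\tt FILTER} translation, which is the reason it is defined last.

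The main obstacle I anticipate is precisely this multiplicity bookkeeping for the non-monotone operators: one must argue, in the bag/m-semiring setting, that the combination of duplicate elimination $\delta$, the monus $\ominus$, and the surrounding natural joins neither over- nor under-counts, so that the resulting multiset cardinalities coincide exactly with those prescribed by the SPARQL algebra. The double-negation construction for {\tt EXISTS} is the most delicate point, together with the verification that the ${\tt unb}$-based conditions $comp$, $disj$, $subst$ and the $first$ function are faithful surrogates for SPARQL compatibility and domain-disjointness. By contrast, the base, {\tt UNION}, {\tt AND}, and {\tt GRAPH} cases, and the reduction of the final query to $i=0$, I expect to be straightforward given the invariant.
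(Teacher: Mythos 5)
Your plan is correct and follows essentially the same route as the paper's own proof: structural induction on graph patterns with a per-graph-identifier correspondence between tuples (using {\tt unb} for undefined variables) and solution mappings, cardinality arguments for each operator, and the $\delta$/monus/re-join gadget to make the non-monotonic cases ({\tt MINUS}, {\bf Diff} in {\tt OPTIONAL}, and the double-negation {\tt EXISTS} encoding) respect SPARQL's multiset semantics. Your explicit statement of the invariant simultaneously for all graph identifiers is a slightly cleaner packaging of what the paper does implicitly when handling the {\tt GRAPH} case, but it is not a different argument.
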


\begin{proof} The proof is by by structural induction on the graph patterns and can be found in the extended version of this paper available at~\url{http://arxiv.org/abs/1209.0378}.
\end{proof}
The constructed translation will be used to extract how-provenance information for SPARQL queries, addressing the problems identified in~\cite{Theoharis:2011}.

\section{Provenance for SPARQL queries}\label{sec:prov}

The crux of the method has been specified in the previous section, and relies on the properties of the extended provenance $m$-semiring ${\cal K}_{dprovd}$  for  language $\cal{RA}^+_{\cal K}(-,\delta)$\@. We just need a definition before we illustrate the approach.

\begin{definition}[Provenance for SPARQL]\label{def:provsparql} Given a graph pattern $P$ and a RDF dataset $D(G)$ the provenance for $P$ is obtained as follows:
\begin{itemize}
\item Construct the base ${\cal K}_{dprovd}$-relations by annotating each tuple in {\tt Graphs} and {\tt Quads} with a new identifier;
\item Construct an annotated query $SPARQL(P,D(G),V)_{{\cal K}_{dprovd}}$ from relational algebra $\transexpr{SPARQL(P,D(G),V)}$ expression by substituting the duplicate elimination operator by $\delta_{1}$ where $1$ is the identity element of ${{\cal K}_{dprovd}}$\@.
\end{itemize}
The provenance information for $P$ is the annotated relation obtained from evaluating $SPARQL(P,D(G),V)_{{\cal K}_{dprovd}}$ with respect to the annotated translation of the dataset $D(G)$\@.
\end{definition}

By the factorization property of ${\cal K}_{dprovd}$ we know that this is the most general $m$-semiring, and thus the provenance obtained according to Definition~\ref{def:provsparql} is the most informative one. We just need to illustrate the approach with Example~\ref{ex:motopt} in order to completely justify its appropriateness.

\begin{example} First, we represent the RDF dataset by ${\cal K}_{dprovd}$-relations where the annotation tags are shown in the last column. The IRIs have been abbreviated:
\[\small
\begin{array}{l}
\begin{array}{ll||l}
\multicolumn{3}{l}{\tt Graphs}\\
{\tt gid} & {\tt IRI}\\\hline
0 & \iri{} & g_0\\
\end{array}\\
\\
\begin{array}{l@{\quad}lll||l}
\multicolumn{5}{l}{\tt Quads}\\
{\tt gid} & {\tt sub} & {\tt pred} & {\tt obj} & \\\hline
0 & \iri{david} & \iri{account} & \iri{bank} & t_1\\
0 & \iri{felix} & \iri{account} & \iri{games} & t_2\\
0 & \iri{bank} & \iri{accountServiceHomepage} & \iri{bank/yourmoney} & t_3\\
\end{array}\\
\end{array}
\]
Returning to query $Q = (Q_1 {\tt\ OPTIONAL\ } Q_2)$ of Example~\ref{ex:query} with (sub)patterns $Q_1= ( ?w, \iri{account}, ?a )$ and $Q_2 = (?a, \iri{accountServiceHomepage}, ?h )$\@, we obtain the following expressions for $Q_1$ and $Q_2$\@:
\[\small
\begin{array}{lll}
\transexpr{Q_1}^G & = & \Pi_{G,w,a}\left[ \rho_{\scriptsize\begin{array}{l}G \leftarrow {\tt gid}\\ w \leftarrow {\tt sub}\\ a \leftarrow {\tt obj} \end{array}} \left({\scriptsize\sigma_{{\tt pred} = \iri{account} }}({\tt Quads}) \right) \right]\\
\transexpr{Q_2}^G & = & \Pi_{G,a,h}\left[ \rho_{\scriptsize\begin{array}{l}G \leftarrow {\tt gid}\\ a \leftarrow {\tt sub}\\  h \leftarrow {\tt obj}\end{array}} \left({\scriptsize\sigma_{{\tt pred} = \iri{accountServiceHomepage} }}({\tt Quads}) \right) \right]\\
\end{array}
\]
returning the annotated relations:
\[\small
\begin{array}{l}
\transexpr{Q_1}^G=\begin{array}{l@{\quad}ll||l}
G & w & a & \\\hline
0 & \iri{david} & \iri{bank} &  t_1\\
0 & \iri{felix} &  \iri{games} & t_2\\
\end{array}\\
\\
\transexpr{Q_2}^G=\begin{array}{l@{\quad}ll||l}
G & a & h & \\\hline
0 & \iri{bank} & \iri{bank/yourmoney} & t_3\\
\end{array}
\end{array}
\]
The expression $\transexpr{(Q_1 {\tt\ AND\ } Q_2)}^G$ used in the construction of the expression for the {\tt OPTIONAL} pattern is:
\[\small
\Pi_{G,w,a \leftarrow first(a',a''),h} \left[ \sigma_{a'=a'' \vee a'={\tt unb} \vee a''={\tt unb}} \left(  \rho_{a'' \leftarrow a}\left(\transexpr{Q_1}^G\right) \Join \rho_{a' \leftarrow a}\left(\transexpr{Q_2}^G\right)\right)\right]
\]
obtaining the annotated relation:
\[\small
\transexpr{(Q_1 {\tt\ AND\ } Q_2)}^G=\begin{array}{l@{\quad}lll||l}
G & w & a & h\\\hline
0 & \iri{david} & \iri{bank} & \iri{bank/yourmoney} & t_1 \times t_3\\
\end{array}\\
\]
We also need to determine the value of $\delta_1({\transexpr{Q_1}^G})$ which is simply:
\[\small
\delta_{1}(\transexpr{Q_1)}^G)=\begin{array}{l@{\quad}ll||l}
G & w & a & \\\hline
0 & \iri{david} & \iri{bank} &  1\\
0 & \iri{felix} &  \iri{games} & 1\\
\end{array}
\]
We can now construct the expression corresponding to the {\bf Diff} operator of SPARQL algebra, namely:
\[\small
\begin{array}{c}
\Pi_{G,w,a,h\leftarrow {\tt unb}}
\left[\transexpr{Q_1}^G \Join \left(\begin{array}{c} \delta_{1}\left(\transexpr{Q_1}^G\right)\\ -\\  \Pi_{G,w,a} \left( \transexpr{(Q_1  {\tt\ AND\ } Q_2)}^G\right) \end{array}\right)\right]
\end{array}
\]
returning the annotated tuples:
\[\small
\begin{array}{l@{\quad}lll||l}
G & w & a & h & \\\hline
0 & \iri{david} & \iri{bank} &  {\tt unb} & t_1 \times(1 -( t_1 \times t_3))\\
0 & \iri{felix} &  \iri{games} & {\tt unb} & t_2 \times(1 - 0) = t_2\\
\end{array}
\]
This is the important step, since $K$-relations assign an annotation to every possible tuple in the domain. If it is not in the support of the relation, then it is tagged with $0$\@. Therefore, the solutions for $(\transexpr{(Q_1  {\tt\ AND\ } Q_2)}^G$ are:
\[\small
\begin{array}{l@{\quad}lll||l}
G & w & a & h & \\\hline
0 & \iri{david} & \iri{bank} &  \iri{bank/yourmoney} & t_1 \times t_3\\
0 & \iri{david} & \iri{bank} &  {\tt unb} & t_1 \times(1 - ( t_1 \times t_3))\\
0 & \iri{felix} &  \iri{games} & {\tt unb} & t_2\\
\end{array}
\]
and for our query, finally we get
\[\small
\begin{array}{l@{\quad}ll||l}
w & a & h & \\\hline
\iri{david} & \iri{bank} &  \iri{bank/yourmoney} & g_0 \times t_1 \times t_3\\
\iri{david} & \iri{bank} &  {\tt unb} & g_0 \times t_1 \times(1 - ( t_1 \times t_3))\\
\iri{felix} &  \iri{games} & {\tt unb} &  g_0 \times t_2\\
\end{array}
\]
The interpretation of the results is the expected and intuitive one. Suppose that (i) we use the boolean $m$-semiring, with just the two values {\tt t} and {\tt f}, meaning that we trust or not trust a triple, (ii) product corresponds to conjunction, (iii) sum corresponds to disjunction, and (iv) difference is defined as $x - y=x \wedge \neg y$\@. So, if we trust $g_0$ and $t_1$\@, $t_2$ and $t_3$  we are able to conclude that we trust the first and third solutions (substitute $1$ and the identifiers of trusted triples by {\tt t} in the annotations, and then evaluate the resulting boolean expression). If we do not trust $t_3$ but trust the other triples then we trust the second and third solutions. Also mark how the graph provenance is also annotated in our solutions. Accordingly, if we don't trust the default graph then we will not trust any of the solutions. Therefore, our method was capable of keeping in the same annotated ${\cal K}_{dprovd}$-relation the several possible alternative solutions, one in each distinct tuple. This was claimed to not be possible in~\cite{Theoharis:2011}.
\end{example} 
  
\section{Discussion and Conclusions}

The literature describes several approaches to extract data provenance/annotated information from RDF(S) data~\cite{Flouris:2009,Dividino:2009,Buneman:2010,Theoharis:2011,Zimmermann12}. A first major distinction is that we extract how-provenance instead of only why-provenance\footnote{We use the terminology ``how-'' and ``why-provenance'' in the sense of~\cite{GKT07:pods}.} of~\cite{Flouris:2009,Dividino:2009,Buneman:2010,Zimmermann12}. Both~\cite{Flouris:2009,Buneman:2010} address the problem of extracting data provenance for RDF(S) entailed triples, but do not support SPARQL. The theory developed in~\cite{Dividino:2009} implements the difference operator using a negation, but it does not handle duplicate solutions according to the semantics of SPARQL because of idempotence of sum; additionally, the proposed difference operator to handle why-provenance discards the information in the right hand argument. The most complete work is~\cite{Zimmermann12} which develops a framework for annotated Semantic Web data, supporting RDFS entailment and providing a query language extending many of the SPARQL features in order to deal with annotated data, exposing annotations at query level via annotation variables, and including aggregates and subqueries (but not property path patterns). However, the sum operator is idempotent in order to support RDFS entailment, and by design the {\tt UNION} operator is not interpreted in the annotation domain. Moreover, the {\tt OPTIONAL} graph pattern discards in some situations the information in the second argument, and thus cannot extract full provenance information.  

The capability of extracting full data how-provenance for SPARQL semantics as prescribed in~\cite{Theoharis:2011} has been shown possible with our work, refuting their claim that existing algebras could not be used for SPARQL. Our approach, like~\cite{Theoharis:2011}, rests on a translation of SPARQL into annotated relational algebra contrasting with the abstract approach of~\cite{Flouris:2009,Dividino:2009,Buneman:2010,Zimmermann12}. The authors in~\cite{Theoharis:2011} argue that this translation process does not affect the output provenance information for the case of (positive) SPARQL. In this way, the major constructs of SPARQL 1.1 are taken care respecting their bag semantics. However, contrary to the works of~\cite{Flouris:2009,Zimmermann12} we do not address the RDF schema entailment rules, and therefore our work is only applicable to simple entailment. 

We plan to address the complete semantics of SPARQL. In particular, aggregates can be handled by summing ($\oplus$) tuples for each group, while property path patterns can generate annotation corresponding to products ($\otimes$) of the involved triples in each solution.
This extension is enough to be able to capture data provenance for RDFS entailment. We also want to explore additional applications in order to assess fully the potential of the  proposed method. 

\paragraph{Acknowledgments} C. V. Damásio was partially supported by FCT Project ERRO PTDC/EIACCO/121823/2010.

\bibliographystyle{abbrv}
\bibliography{provlp}

\newpage

\appendix

\section{Proof of the main result}

This appendix provides a detailed proof of Theorem~\ref{th:map2ra} by structural induction on the graph patterns.

The induction proof will construct an expression containing as attributes the graph attribute and an attribute for each in-scope variable of the graph pattern. Assume that the active graph is given and has id $j$ ($0$ for the default graph, or $1 \leq j \leq n$ for the case of a named graph). We show that the cardinality of solutions of each SPARQL algebra operator is respected for the active graph, which is the most difficult part. Notice that the graphs by definition do not have duplicate triples, and therefore there will not be duplicates in the original base relations {\tt Graphs} and {\tt Quads}.

\paragraph{Empty graph pattern} Recall that he empty graph pattern is translated into $\transexpr{()}^G = \Pi_G\left[  \rho_{G \leftarrow {\tt gid}} ({\tt Graphs}) \right]$\@. This means that the result of the relational algebra query returns as many as tuples as graphs in the RDF dataset, and at least a tuple with identifier 0 for the initial default graph. Moreover, note that for each particular graph  id (default or named) this relation has exactly 1 tuple with that id.

\paragraph{Triple pattern} Since there are no duplicate triples in the graphs, then the number of possible solutions of a triple pattern are exactly the number of triples that match the pattern, one for each instance. We analyse the correctness of the translation of the triple pattern according to the number of variables  that may occur on it: no variables, one variable, two variables and three variables.

\begin{description}
\item[0 vars:] In this case the triple pattern  $t=(t_1,t_2,t_3)$ contains only RDF terms which might be identical, or not. In this case we obtain the relational algebra expression $\Pi_G\left[\rho_{G\leftarrow {\tt gid}}\left(\sigma_{{\tt sub} = t_1 \wedge {\tt pred} = t_2 \wedge {\tt obj} = t_3}({\tt Quads})\right)\right]$\@. The resulting relation has just the attribute (column) $G$\@. If the triple occurs in graph $i$ then the resulting relation will contain a tuple having value $i$ in attribute $G$\@; otherwise no tuple will occur for graph $i$\@. 
\item[1 var:] We have several cases to consider here: there is just one occurrence of a variable, two or three.

Consider that the variable occurs in the subject of the triple; for the remaining cases in predicate or object the reasoning is similar. So, let $t=(?v,t_2,t_3)$, obtaining the relational algebra expression \[\Pi_{G,v}\left[\rho_{G\leftarrow {\tt gid}, v \leftarrow {\tt sub}}\left(\sigma_{{\tt pred} = t_2 \wedge {\tt obj} = t_3}({\tt Quads})\right)\right]\] The resulting relation has two columns, one for the graph $G$ and one for collecting the bindings for $v$ (i.e. the solution). If the triple occurs in graph $i$ then the resulting relation will contain a tuple having value $i$ in attribute $G$\@; otherwise no tuple will occur for graph $i$\@. 

Consider now that the variable occurs in the subject and predicate of the triple; for the remaining cases the reasoning is similar. So, let $t=(?v,?v,t_3)$, obtaining the relational algebra expression \[\Pi_{G,v}\left[\rho_{G\leftarrow {\tt gid}, v \leftarrow {\tt sub}}\left(\sigma_{{\tt sub} ={\tt pred} \wedge {\tt obj} = t_3}({\tt Quads})\right)\right]\] The resulting relation has two columns, one for the graph $G$ and one for collecting the bindings for $v$ (i.e. the solution). The selection condition guarantees that the obtained instances match the triple pattern, obtaining for each graph $i$ one tuple for each possible value of $v$\@.

If there are three occurrences of the variable $?v$ then $t=(?v,?v,?v)$, then one obtains the expression  \[\Pi_{G,v}\left[\rho_{G\leftarrow {\tt gid}, v \leftarrow {\tt sub}}\left(\sigma_{{\tt sub} ={\tt pred} \wedge {\tt sub} ={\tt obj} \wedge {\tt pred} ={\tt obj}}({\tt Quads})\right)\right]\] Note that one of the equalities in the selection condition is redundant. As before, for each graph, we get a tuple for each possible value of $v$\@.

\item[2 vars:] There are two variant of patterns here $t=(?v1,?v2,t_3)$ and $t=(?v1,?v2,?v2)$ and permutations. For the first pattern we get
 \[\Pi_{G,v1,v2}\left[\rho_{G\leftarrow {\tt gid}, v1 \leftarrow {\tt sub},v2 \leftarrow {\tt pred}}\left(\sigma_{{\tt obj} = t_3}({\tt Quads})\right)\right]\] while  for the second we get
\[\Pi_{G,v1,v2}\left[\rho_{G\leftarrow {\tt gid}, v1 \leftarrow {\tt sub},v2 \leftarrow {\tt pred}}\left(\sigma_{{\tt pred} = {\tt obj}}({\tt Quads})\right)\right]\]
It is easy to see that each solution (for each graph) corresponds exactly to one tuple in the evaluation of the translated  relation.
\item[3 vars:] This case is immediate and being the triple pattern $t=(?v1,?v2,?v3)$\@. The translation is $\Pi_{G,v_1,v_2,v_3}\left[\rho_{G\leftarrow {\tt gid},v1\leftarrow {\tt sub},v2 \leftarrow {\tt pred},v3\leftarrow {\tt obj}}({\tt Quads})\right]$\@, obtaining a relation which is isomorphic to ${\tt Quads}$, as expected. For each graph $i$, we obtain a tuple with value $i$ in attribute $G$ and remaining attributes corresponding exactly to one triple in graph $i$\@.
\end{description}

\paragraph{{\tt UNION} pattern} The translated relational algebra expression $\transexpr{(P_1  {\tt\ UNION\ } P_2)}^G$  is:
\[
\begin{array}{c}
\Pi_{G,var(P_1) \cup \{ v \leftarrow {\tt unb} \mid v \in var(P_2) \setminus var(P_1)\}} \left( \transexpr{P_1}^G \right)\\
\bigcup\\
\Pi_{G,var(P_2) \cup \{ v \leftarrow {\tt unb} \mid v \in var(P_1) \setminus var(P_2)\}} \left( \transexpr{P_2}^G \right)\\
\end{array}
\]  
The relational algebra expressions makes the union of two projections. Each projection will not remove any in-scope variable, and it is used to extend the columns with unbound values in order to obtain a relation with columns $G \cup var(P_1) \cup var(P_2)$\@, by making unbound the variables that do not occur in the sub-pattern $\transexpr{P_1}^G$ or  $\transexpr{P_2}^G$\@, respectively. Therefore, each subexpression  the projection operator will not remove any in-scope attributes of each sub-pattern $\transexpr{P_1}^G$ or  $\transexpr{P_2}^G$, and thus it returns exactly as many as tuples as the number of solutions of each sub-pattern by induction hypothesis. The cardinality of the resulting expression  is the sum of solutions of each sub-expression, according to the bag semantics of relational algebra.

\paragraph{ {\tt AND pattern}} The translated relational algebra expression $\transexpr{(P_1  {\tt\ AND\ } P_2)}^G$  is:
\[
\begin{array}{l@{}l}
\Pi_{\scriptsize\begin{array}{l}G,\\var(P_1) - var(P_2),\\var(P_2) - var(P_1),\\ v_1 \leftarrow first(v'_1,v''_1),\ldots,\\ v_n \leftarrow first(v'_n,v''_n)\end{array}} & \left[ \sigma_{comp} \left( \rho_{\scriptsize\begin{array}{c}v'_1 \leftarrow v_1\\ \vdots\\ v'_n \leftarrow v_n\\\end{array}} \left(\transexpr{P_1}^G\right)  \Join  \rho_{\scriptsize\begin{array}{c}v''_1 \leftarrow v_1\\ \vdots\\ v''_n \leftarrow v_n\\\end{array}} \left(\transexpr{P_2}^G \right) \right) \right]
\end{array}
\] 
where $comp$ is a conjunction of conditions $v'_i = {\tt unb} \vee v''_i = {\tt unb} \vee v'_i = v''_i$ for each common variable $v_i (1 \leq i \leq n)$\@. The function $first$ returns the first argument which is not {\tt unb}\@, or {\tt unb} if both arguments are {\tt unb}\@. 

The joined subexpressions inside the selection have only the common attribute $G$ due to the renaming of common variables. So, every tuple of the subexpression for $\transexpr{P_1}^G$ will join 
with every tuple of $\transexpr{P_2}^G$, for each graph. If there is a solution with cardinality $c_1$ of $P_1$ and solution with cardinality $c_2$ of $P_2$, one will obtain $c_1 \times c_2$ tuples in the result of the joined expression. From these possibilities, the selection expression keeps he combinations of solutions which are compatible: for each pair $v'_i$ and $v''_i$ the condition guarantees that at least 1 variable is unbound or have the same value. So, we only keep the tuples for each possible merge (with cardinality $c_1 \times c_2$ since selection keeps duplicate tuples). The projection is necessary to obtain the relation on the original variables, besides $G$, obtaining the value from the first bound variable, if any. It is also necessary to recall that the bag semantics for the projection operator will obtain as many tuples as the contributing tuples, summing over the obtained solutions  as required by the cardinality condition of the {\tt AND} operator (see the definition of the $\cup$ for $K$-relations) .

\paragraph{ {\tt FILTER pattern}} The relational algebra expression $\transexpr{(P  {\tt\ FILTER\ } R)}^G$\@ is
\[
\Pi_{G,var(P)}\left[\sigma_{filter} \left( \transexpr{P}^G \Join E_1 \Join \ldots \Join E_m \right)\right]
\]
where $filter$ is a condition obtained from $R$  where each occurrence of ${\tt EXISTS}(P_i)$ (resp. ${\tt NOT\ EXISTS}(P_i)$) in $R$ is substituted by condition $ex_i <> 0$ (resp. $ex_i = 0$)\@, where $ex_i$ is a new attribute name. Expression $E_i (1 \leq i \leq m)$ is:
\[
\begin{array}{c}
\Pi_{G, var(P),ex_i \leftarrow 0} \left[ \delta(P') - \Pi_{G,var(P)}\left( \sigma_{subst}\left( P' \Join \rho_{\scriptsize\begin{array}{c}v'_1 \leftarrow v_1\\ \vdots\\ v'_n \leftarrow v_n\\\end{array}} (P'_i) \right)\right)\right]\\
\bigcup\\
\Pi_{G, var(P),ex_i \leftarrow 1} \left[ \delta(P') - \left[ \delta(P') - \Pi_{G,var(P)}\left( \sigma_{subst}\left(P' \Join \rho_{\scriptsize\begin{array}{c}v'_1 \leftarrow v_1\\ \vdots\\ v'_n \leftarrow v_n\\\end{array}} (P'_i) \right)\right) \right] \right]
\end{array}
\]
where $P'=\transexpr{P}^G$, $P_i'=\transexpr{P_i}^G$, and $\mathit{subst}$ is the conjunction of conditions $v_i = v'_i \vee v_i = {\tt unb}$ for each variable $v_i$ in $var(P) \cap var(P_i) = \{ v_1, \ldots, v_n\}$\@.

The translation is  complex due to the {\tt EXISTS} expressions. Note that each $E_i$ expression returns exactly one tuple for each solution of pattern {\tt P}. Thus, the duplicate removal operations are there just to guarantee this and do no affect the cardinality of the solutions, obtaining one solution for each solution of {\tt P} that obeys to the filter condition. The rest of the translation is more or less immediate, where the condition $subst$ does not correspond exactly to the compatibility condition used before, since according to the SPARQL semantics the variables of pattern ${\tt P_1}$ are substituted in the {\tt EXISTS} pattern (we discard the cases where a variable is bound by ${\tt P_1}$ and not bound in ${\tt P_i}$)\@.

\paragraph{ {\tt MINUS pattern}} The relational algebra expression  $\transexpr{(P_1  {\tt\ MINUS\ } P_2)}^G$\@ is
\[
\transexpr{P_1}^G  \Join \left[ \delta\left( \transexpr{P_1}^G \right) - \Pi_{G,var(P_1)} \left[ \sigma_{comp  \wedge \neg disj} \left( \transexpr{P_1}^G  \Join  \rho_{\scriptsize\begin{array}{c}v'_1 \leftarrow v_1\\ \vdots\\ v'_n \leftarrow v_n\\\end{array}} \left(\transexpr{P_2}^G \right) \right) \right]\right]
\]
where $comp$ is a conjunction of conditions $v_i = {\tt unb} \vee v'_i = {\tt unb} \vee v_i = v'_i$ for each variable common $v_i (1 \leq i \leq n)$\@, and $disj$ is the conjunction of conditions $v_i = {\tt unb} \vee v'_i = {\tt unb}$\@ for each variable $v_i (1 \leq i \leq n)$\@.

The difference expression will return either a tuple of $\transexpr{P_1}^G$ or not, without duplicates. The difference expression evaluation will return solutions of $P_1$ that do not belong to the expression in the right-hand side of the difference. Since the right-hand side of expression obtains the tuples corresponding to the solutions of ${\tt P_1}$ for which there is at least one solution in ${\tt P_2}$ that is compatible and not disjoint (no bound variable in common), then we obtain as result the tuples corresponding to solutions of $P_1$ such that for all solution $P_2$ the solutions are incompatible or are disjoint (the semantics of {\tt MINUS}). Now, the difference expression will have no duplicates, and thus we keep only the solutions of $P_1$ that join (i.e. that obey to the condition), without increasing the cardinality of the result.

\paragraph{{\tt OPTIONAL pattern}} The relational algebra  expression $\transexpr{(P_1  {\tt\ OPTIONAL\ } (P_2 {\tt\ FILTER\ } R))}^G$\@ is
\[
\begin{array}{c}
\transexpr{(P_1  {\tt\ AND\ } P_2)}^G\\
\bigcup\\
\Pi_{G,\scriptsize var(P_1) \cup \{ v \leftarrow {\tt unb} \mid v \in var(P_2) \setminus var(P_1)\}} \\
\\
\left[\transexpr{P_1}^G \Join \left(\begin{array}{c} \delta\left(\transexpr{P_1}^G\right)\\ -\\  \Pi_{G,var(P_1)} \left( \transexpr{(P_1  {\tt\ AND\ } P_2) {\tt\ FILTER\ } R}^G\right) \end{array}\right)\right]
\end{array}
\]
According to the semantics of SPARQL 1.1, the {\tt OPTIONAL} pattern evaluation is performed by two operators: a join and a left join. This is particularly clear in the translation, where the join is the first expression and the left join the lower (big) expression below the union operator. The rationale of the translation of the left join operator is identical to the translation of the {\tt MINUS} pattern, except now that we need to make unbound the variables in ${\tt P_2}$ but not in ${\tt P_1}$. Again, the number of obtained tuples is according to the semantics of SPARQL 1.1: it is the sum of the tuples of the join with the sum of tuples of the left join (guaranteed by the bag semantics of $\cup$).

\paragraph{{\tt GRAPH pattern} } The translation of $({\tt GRAPH\ } term\ P_1)$ has two cases:
\begin{itemize}
\item If $term$ is an IRI then $\transexpr{({\tt GRAPH\ } term\ P_1)}^G$ is
\[
\transexpr{()}^G \Join \Pi_{var(P_1)} \left[ \Pi_{G'} \left( \rho_{\scriptsize G' \leftarrow {\tt gid}} \left( \sigma_{term = {\tt IRI}}( {\tt Graphs}) \right) \right)\Join \transexpr{P_1}^{G'} \right]
\]
The selection expression obtains a single tuple containing the identifier corresponding to the $term$\@, or obtains the empty relation if there is no named graph with that IRI. This tuple joins with $\transexpr{P_1}^{G'}$ to limit the results to the intended graph. Moreover it is important that the use of a renamed attribute for the graph is necessary in order to avoid clashes of attributes corresponding to the active graph. Moreover, the graph pattern will return the same results independently of the active graph and this is captured by the join with the empty graph pattern.

\item If $term$ is a variable $v$ then $\transexpr{({\tt GRAPH\ } term\ P_1)}^G$ is
\[
\transexpr{()}^G \Join \Pi_{\{v\} \cup var(P_1)} \left[ \rho_{\scriptsize G' \leftarrow {\tt gid}, v \leftarrow {\tt IRI}} \left( \sigma_{{\tt gid} > 0}( {\tt Graphs}) \right)\Join \transexpr{P_1}^{G'} \right]
\]
This case is a little more complex, because now we consider all the named graphs (${\tt gid} > 0$, and bind the variable $v$ with the corresponding IRIs. The rationale of the construction is the same of the previous case.
\end{itemize}

To conclude the proof, we just need to analyse the results of the expression corresponding to the full query $\transexpr{SPARQL(P,D(G),V)} = \Pi_{V}\left[\sigma_{G'=0}\left( \transexpr{()}^{G'} \Join \transexpr{P}^{G'} \right)\right]$\@ with respect to the base relations {\tt Graphs} and {\tt Quads}, where $G'$ is a new attribute name and $V \subseteq var(P)$\@. The selection starts the evaluation at the default graph ($G'=0$) and projects in the selected variables. The correctness of the translation is now immediate due to the previous induction.

\end{document}